\newtheorem{theorem}{Theorem}
\newtheorem{lemma}[theorem]{Lemma}
\newtheorem{corollary}[theorem]{Corollary}
\newtheorem{assumption}{Assumption}
\newenvironment{proof}[1][\scshape{Proof:}]
{\begin{trivlist} \item[\hskip \labelsep {\small #1}]}
{\begin{flushright}$\blacksquare$\end{flushright}\end{trivlist}}
\def\lm{ \mu }
\def\TURN{ { \gamma } }
\def\p1{ { g_1 } }
\def\l1{ { l } }
\def\R{ {\mathbb{R} }}
\def\P{{P}}
\def\E{{E}}
\def\A{ {\mathbb{A} }}
\begin{document}

\title{Guesswork, large deviations and Shannon entropy}
\author{Mark M. Christiansen and Ken R. Duffy
\thanks{M. Christiansen and K. R. Duffy are with the Hamilton Institute,
National University of Ireland Maynooth.  Supported by the Irish
Higher Educational Authority (HEA) PRTLI Network Mathematics Grant.}}

\maketitle

\begin{abstract}
How hard is it guess a password? Massey showed that the Shannon
entropy of the distribution from which the password is selected is
a lower bound on the expected number of guesses, but one which is
not tight in general. In a series of subsequent papers under ever
less restrictive stochastic assumptions, an asymptotic relationship
as password length grows between scaled moments of the guesswork
and specific R\'{e}nyi entropy was identified.

Here we show that, when appropriately scaled, as the password length
grows the logarithm of the guesswork satisfies a Large Deviation
Principle (LDP), providing direct estimates of the guesswork
distribution when passwords are long. The rate function governing
the LDP possess a specific, restrictive form that encapsulates
underlying structure in the nature of guesswork. Returning to
Massey's original observation, a corollary to the LDP shows that
expectation of the logarithm of the guesswork is the specific Shannon
entropy of the password selection process.

\end{abstract}

\begin{IEEEkeywords}
Guesswork, R\'enyi Entropy, Shannon Entropy, Large Deviations
\end{IEEEkeywords}

\section{Introduction}
If a password, $W$, is chosen at random from a finite set
$\A=\{1,\ldots,m\}$, how hard is it to guess $W$? If $\{P(W=w)\}$
is known, then an optimal strategy is to guess passwords in decreasing
order of probability. Let $G(w)$ denote the number of attempts
required before correctly guessing $w\in\A$, called $w$'s guesswork.
Massey \cite{Massey94} proved that the Shannon entropy of $W$ is a
lower bound on the expected guesswork, $\E(G(W))$, and that no
general upper bound exists. This raised serious questions about the
appropriateness of Shannon entropy as a measure of complexity of a
distribution with regards guesswork. As a corollary to stronger
results, in this article we identify a large password relationship
between the expectation of the logarithm of the guesswork and
specific Shannon entropy.

Arikan \cite{Arikan96} introduced an asymptotic regime for studying
this problem by considering a sequence of passwords, $\{W_k\}$,
with $W_k$ chosen from $\A^k$ with i.i.d. letters. Again guessing
potential passwords in decreasing order of probability for each $k$,
he related the asymptotic fractional moments of the guesswork to
the R\'enyi entropy of a single letter,
\begin{align*}
\lim_{k\to\infty} \frac 1k \log \E(G(W_k)^\alpha) = 
(1+\alpha) \log\sum_{w\in\A} P(W_1=w)^{\frac{1}{1+\alpha}}
\end{align*}
for $\alpha>0$, where the right hand side is $\alpha$ times the
R\'enyi entropy of $W_1$ evaluated at $1/(1+\alpha)$. This result
was subsequently extended by Malone and Sullivan \cite{Malone04}
to word sequences with letters chosen by a Markov process and,
further still, by Pfister and Sullivan \cite{Pfister04} to sophic
shifts whose shift space satisfies an entropy condition and whose
marginals possess a limit property. Recently, using a distinct
approach Hanawal and Sundaresan \cite{Hanawal11} provided alternate
sufficient conditions for the existence of the limit. In all cases,
the limit is identified in terms of the specific R\'enyi entropy
\begin{align}
\label{eq:guess}
\lim_{k\to\infty}\frac 1k \log \E(G(W_k)^\alpha) = 
\alpha\lim_{k\to\infty} \frac{1}{k} R_k\left(\frac{1}{1+\alpha}\right),
\end{align}
where $R_k(\alpha)$ is the R\'enyi entropy of $W_k$
\begin{align*}
R_k(\alpha) 
	= \frac{1}{1-\alpha}\log\left(\sum_{w\in\A^k} P(W_k=w)^\alpha\right).
\end{align*}

Here we shall assume the existence of the limit on the left hand
side of equation \eqref{eq:guess} for all $\alpha>-1$, its equality
with $\alpha$ times specific R\'enyi entropy, its differentiability
with respect to $\alpha$ in that range and a regularity condition
on the probability of the most-likely word, that $\lim k^{-1} \log
P(G(W_k)=1)$ exists. From this, Theorem \ref{thm:main} deduces that
the sequence $\{k^{-1}\log G(W_k)\}$ satisfies a Large Deviation
Principle (LDP) (e.g. \cite{Dembo}) with a rate function $\Lambda^*$
that must possess a specific form that will have a physical
interpretation: $\Lambda^*$ is continuous where finite, can be
linear on an interval $[0,a]$, for some $a\in[0,\log(m)]$, and then
must be strictly convex while finite on $[a,\log(m)]$.

In contrast to earlier results, Corollary \ref{cor:curious} to the
LDP gives direct estimates on the guesswork distribution $P(G(W_k)=n)$
for large $k$, suggesting the approximation
\begin{align}
\label{eq:wknapprox}
P(G(W_k)=n) \approx \frac 1n \exp(-k\Lambda^*(k^{-1}\log n)). 
\end{align}
As this calculation only involves the determination of $\Lambda^*$,
to approximately calculate the probability of the $n^{\rm th}$ most
likely word in words of length $k$ one does not have to identify
the word itself, which would be computationally cumbersome, particularly
for non-i.i.d. word sources.

Corollary \ref{cor:shannon} to the LDP recovers a r\^ole for Shannon
entropy in the asymptotic analysis of guesswork. It shows that the
scaled expectation of the logarithm of the guesswork converges to
specific Shannon entropy
\begin{align*}
\lim_{k\to\infty}\frac 1k \E(\log G(W_k)) = 
\lim_{k\to\infty}\frac 1k H(W_k),
\end{align*}
where
\begin{align*}
H(W_k):= \sum_{w\in\A^k} \P(W_k=w)\log\P(W_k=w).
\end{align*}

\section{A Large Deviation Principle}

Consider the sequence of random variables $\{k^{-1}\log G(W_k)\}$.
Our starting point is the observation that the left hand side of
\eqref{eq:guess} is the scaled Cumulant Generating Function (sCGF)
of this sequence:
\begin{align*}
\Lambda(\alpha):=
	\lim_{k \rightarrow \infty}\frac{1}{k}\log 
	\E\left(e^{\alpha\log G(W_k)}\right),
\end{align*}
which is shown to exist for $\alpha>0$ in \cite{Arikan96}\cite{Malone04}
and for $\alpha>-1$ in \cite{Pfister04}.
\begin{assumption}
\label{ass:1}
For $\alpha>-1$, the sCGF $\Lambda(\alpha)$ exists, is equal to
$\alpha$ times the specific R\'enyi entropy, and has a continuous
derivative in that range.
\end{assumption}
We also assume the following regularity condition on the probability
of the most likely word.
\begin{assumption}
\label{ass:2}
The limit 
\begin{align}
\label{def:p1}
\p1=\lim_{k\to\infty} \frac 1k \log P(G(W_k)=1) 
\end{align}
exists in $(-\infty,0]$.
\end{assumption}
This assumption is transparently true for words constructed of
i.i.d. or Markovian letters. 

We first show that the sCGF exists everywhere.
\begin{lemma}[Existence of the sCGF]
\label{lem:flatsCGF}
Under assumptions \ref{ass:1} and \ref{ass:2}, for all $\alpha\leq-1$
\begin{align*}
\Lambda(\alpha)
= \lim_{k \rightarrow \infty}\frac{1}{k}\log \P(G(W_k)=1) 
= \p1
= \lim_{\beta\downarrow -1} \Lambda(\beta).
\end{align*}
\end{lemma}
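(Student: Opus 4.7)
The plan is to sandwich $\Lambda(\alpha)$ between two copies of $\p1$ on $(-\infty,-1]$, and then recover the right limit at $-1$ as a by-product of the same estimate. The lower bound is immediate: retaining only the $n=1$ term in the expansion $\E(G(W_k)^\alpha)=\sum_{n=1}^{m^k} n^\alpha \P(G(W_k)=n)$ gives $\E(G(W_k)^\alpha)\geq \P(G(W_k)=1)$, so taking $k^{-1}\log$ and $k\to\infty$ yields $\Lambda(\alpha)\geq \p1$ wherever the defining limit is known to exist.

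For the upper bound I would exploit that guesses are made in decreasing order of probability, so $\P(G(W_k)=n)\leq \P(G(W_k)=1)$ for every $n\in\{1,\dots,m^k\}$. Factoring this out yields $\E(G(W_k)^\alpha)\leq \P(G(W_k)=1)\sum_{n=1}^{m^k} n^\alpha$. For $\alpha<-1$ the series $\sum_{n\geq 1}n^\alpha$ converges to a finite constant $C(\alpha)$; for $\alpha=-1$ the partial sum is bounded by $1+k\log m$. In both regimes the sum grows only subexponentially in $k$, so $k^{-1}\log\E(G(W_k)^\alpha) \leq k^{-1}\log\P(G(W_k)=1) + o(1)$, which together with the lower bound simultaneously shows that $\Lambda(\alpha)$ exists on $(-\infty,-1]$ and equals $\p1$.

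For the right limit I would refine the same bound for $\beta\in(-1,0)$ by comparing the power sum with an integral: $\sum_{n=1}^{m^k}n^\beta\leq 1+((m^k)^{\beta+1}-1)/(\beta+1)$. Plugging this in and using Assumption~\ref{ass:1} to know $\Lambda(\beta)$ is already defined there gives $\Lambda(\beta)\leq \p1+(\beta+1)\log m$, while the trivial lower bound $\Lambda(\beta)\geq \p1$ still applies. Sending $\beta\downarrow -1$ squeezes $\lim_{\beta\downarrow -1}\Lambda(\beta)=\p1$, closing the chain of equalities in the statement.

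The one delicate point is the transition between $\alpha<-1$, where the tail $\sum n^\alpha$ is summable, and $\alpha=-1$, where it grows only logarithmically in $k$. This is precisely why the ``factor out $\P(G(W_k)=1)$'' trick cannot be pushed into $\alpha\in(-1,0)$ with a constant-in-$k$ bound, and is responsible for the kink of $\Lambda$ at $-1$ that later forces the rate function to be linear on an initial interval. I expect that case split, rather than any single estimate, to be the main bookkeeping obstacle of the argument.
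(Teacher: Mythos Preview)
Your sandwich on $(-\infty,-1]$ is exactly the paper's argument: retain the $n=1$ term for the lower bound, factor out $\P(G(W_k)=1)$ and control $\sum_{n\leq m^k}n^\alpha$ for the upper bound (the paper phrases the latter via the Principle of the Largest Term and ``usual estimates on the harmonic series,'' but the content is your case split $\alpha<-1$ versus $\alpha=-1$).

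The one genuine difference is how the right limit $\lim_{\beta\downarrow-1}\Lambda(\beta)$ is obtained. You push the same factor-out trick into $(-1,0)$ with the integral comparison, getting the quantitative envelope $\p1\leq\Lambda(\beta)\leq\p1+(\beta+1)\log m$ and squeezing. The paper instead argues abstractly: $\Lambda$ is a pointwise limit of convex functions and has just been shown to be finite on all of $\R$ (Assumption~\ref{ass:1} on $(-1,\infty)$, the sandwich on $(-\infty,-1]$), hence it is continuous, so $\lim_{\beta\downarrow-1}\Lambda(\beta)=\Lambda(-1)=\p1$ in one line. Your approach buys an explicit rate of convergence and makes transparent \emph{why} the kink sits exactly at $-1$; the paper's buys brevity once global finiteness is in hand.
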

\begin{proof}
Let $\alpha\leq-1$ and note that
\begin{align*}
&\log P(G(W_k)=1) 
\leq 
\log \sum_{i=1}^{m^k}P(G(W_k)=i)i^\alpha\\
&=
\log \E\left(e^{\alpha\log G(W_k)}\right)
\leq \log P(G(W_k)=1) + \log \sum_{i=1}^{\infty}i^\alpha.
\end{align*}
Taking $\liminf_{k\to\infty}k^{-1}$ with the first inequality and
$\limsup_{k\to\infty}k^{-1}$ with the second while using the Principle
of the Largest Term, \cite[Lemma 1.2.15]{Dembo} and usual estimates
on the harmonic series, we have that
\begin{align*}
\lim_{k\to\infty} \frac 1k \log \E(e^{\alpha\log G(W_k)})
	= \lim_{k\to\infty} \frac 1k \log P(G(W_k)=1) 
\end{align*}
for all $\alpha\leq-1$. 

As $\Lambda$ is the limit of a sequence
of convex functions and is finite everywhere, it is continuous
and therefore $\lim_{\beta\downarrow-1}\Lambda(\beta)=\Lambda(-1)$. 
\end{proof}
Thus the sCGF $\Lambda$ exists and is finite for all $\alpha$, with
a potential discontinuity in its derivative at $\alpha=-1$. This
discontinuity, when it exists, will have a bearing on the nature
of the rate function governing the LDP for $\{k^{-1}\log G(W_k)\}$.
Indeed, the following quantity will play a significant r\^ole in
our results:
\begin{align}
\label{eq:turn}
\TURN:=\lim_{\alpha \downarrow -1}\frac{d}{d\alpha}\Lambda(\alpha).
\end{align}
We will prove that the number of words with approximately equal
highest probability is close to $\exp(k\TURN)$. In the special case
where the $\{W_k\}$ are constructed of i.i.d. letters, this is
exactly true and the veracity of the following Lemma can be verified
directly.
\begin{lemma}[The number of most likely words]
\label{lem:iid}
If $\{W_k\}$ are constructed of i.i.d. letters, then
\begin{align*}
\TURN &= 
	\lim_{\alpha \downarrow -1} 
	\frac{d}{d\alpha} \alpha R_1((1+\alpha)^{-1})\\
	&= \log |\{w:P(W_1=w)=P(G(W_1)=1)\}|,
\end{align*}
where $|\cdot|$ indicates the number of elements in the set.
\end{lemma}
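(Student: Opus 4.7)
The plan is to compute $\lim_{\alpha\downarrow-1}\Lambda'(\alpha)$ directly from Arikan's closed-form expression for $\Lambda$. Because $\{W_k\}$ has i.i.d.\ letters, the specific R\'enyi entropy coincides with the single-letter R\'enyi entropy $R_1$, and under Assumption~\ref{ass:1} one has
$$\Lambda(\alpha) = \alpha R_1\bigl((1+\alpha)^{-1}\bigr) = (1+\alpha)\log\sum_{w\in\A} P(W_1=w)^{1/(1+\alpha)}.$$
This already gives the first equality of the lemma and reduces the second to identifying the limit of the $\alpha$-derivative of the right-hand side as $\alpha\downarrow-1$.

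Next, factor out the maximum probability. With $p^{*}:=P(G(W_1)=1)$, $N:=|\{w:P(W_1=w)=p^{*}\}|$, and
$$g(\alpha):=\sum_{w\in\A}\bigl(P(W_1=w)/p^{*}\bigr)^{1/(1+\alpha)},$$
one has $\Lambda(\alpha)=\log p^{*}+(1+\alpha)\log g(\alpha)$, hence
$$\Lambda'(\alpha)=\log g(\alpha)+(1+\alpha)\frac{g'(\alpha)}{g(\alpha)}.$$
Set $x:=1/(1+\alpha)\to+\infty$ as $\alpha\downarrow-1$. Each summand of $g(\alpha)$ equals either $1$ (when $P(W_1=w)=p^{*}$) or $r^{x}$ with $r\in(0,1)$, so $g(\alpha)\to N$ and $\log g(\alpha)\to\log N$.

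It remains to show that the correction $(1+\alpha)g'(\alpha)/g(\alpha)$ vanishes in the limit. Term-by-term differentiation over the finite alphabet $\A$ gives
$$(1+\alpha)g'(\alpha)=-x\sum_{P(W_1=w)<p^{*}}\bigl(P(W_1=w)/p^{*}\bigr)^{x}\log\bigl(P(W_1=w)/p^{*}\bigr),$$
the maximisers dropping out because their log-ratio is zero. Each remaining term is bounded by a constant multiple of $x r^{x}$ with $r\in(0,1)$, which tends to $0$; combined with $g(\alpha)\to N>0$, this yields $\TURN=\lim_{\alpha\downarrow-1}\Lambda'(\alpha)=\log N$, as required.

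The computation is almost entirely routine; the only mildly delicate point is the $0\cdot\infty$ indeterminacy in $(1+\alpha)g'(\alpha)$, settled by the elementary fact that $x r^{x}\to 0$ for $r\in(0,1)$. Conceptually, the lemma expresses that as $x\to\infty$ the $\ell^{x}$-norm of the letter-probability vector concentrates on its maximisers, and the $\alpha$-derivative of $(1+\alpha)\log g(\alpha)$ reads off the logarithm of their multiplicity.
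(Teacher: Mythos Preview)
Your proof is correct and is precisely the direct verification the paper alludes to but does not spell out: in the i.i.d.\ case $\Lambda(\alpha)=(1+\alpha)\log\sum_{w}P(W_1=w)^{1/(1+\alpha)}$, and your factorisation by $p^*$ together with the elementary fact that $x r^{x}\to0$ for $r\in(0,1)$ cleanly isolates $\log N$ as the limiting derivative. There is nothing further to compare, since the paper gives no details beyond stating that the lemma ``can be verified directly.''
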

This i.i.d. result doesn't extend directly to the non-i.i.d. case
and in general Lemma \ref{lem:iid} can only be used to establish a
lower bound on $\TURN$:
\begin{align}
\label{eq:neqTURN}
\TURN=\lim_{\alpha \downarrow -1}\frac{d}{d\alpha}\Lambda(\alpha)
\ge \limsup_{k\to\infty} \lim_{\alpha\downarrow -1} 
\frac{d}{d\alpha} \alpha R_k((1+\alpha)^{-1}),
\end{align}
e.g \cite[Theorem 24.5]{Rockafellar70}. This lower bound
can be loose, as can be seen with the following example. Consider the
sequence of distributions for some $\epsilon>0$
\begin{align*}
P(W_k=i) = \begin{cases}
	m^{-k}(1+\epsilon) & \text{if } i=1	\\
	m^{-k}(1-\epsilon(m^k-1)^{-1})) & \text{otherwise}.
	\end{cases}
\end{align*}
For each fixed $k$ there is one most likely word and we have
$\log(1)=0$ on the right hand side of equation \eqref{eq:neqTURN}
by Lemma \ref{lem:iid}. The left hand side, however, gives $\log(m)$.
Regardless, this intuition guides our understanding of $\TURN$, but
the formal statement of it approximately capturing the number of
most likely words will transpire to be
\begin{align*}
\p1=
	\lim_{k\to\infty} 
	\frac 1k \log \inf_{\{w:G(w)<\exp(k\TURN)\}} P(W_k=w),
\end{align*}
where $\p1$ is defined in equation \eqref{def:p1}.

We define the candidate rate function as the Legendre-Fenchel
transform of the sCGF
\begin{align*}
\Lambda^*(x) &:= \sup_{\alpha\in\R} \{x\alpha -\Lambda(\alpha)\}\\
	&=
	\begin{cases}
	-x-\p1 & \text{ if } x\in[0,\TURN]\\
	\sup_{\alpha\in\R} \{x\alpha -\Lambda(\alpha)\} 
		& \text{ if } x\in(\TURN,\log(m)].
	\end{cases}
\end{align*}
The LDP cannot be proved directly by Baldi's version of the
G\"artner-Ellis theorem \cite{Baldi88}\cite[Theorem 4.5.20]{Dembo}
as $\Lambda^*$ does not have exposing hyper-planes for $x\in[0,\TURN]$.
Instead we use a combination of that theorem with the methodology
described in detail in \cite{Lewis95A} where, as our random variables
are bounded $0\leq k^{-1} \log G(W_k)\leq \log(m)$, in order to
prove the LDP it suffices to show that the following exist in
$[0,\infty]$ for all $x\in[0,\log m]$ and equals $-\Lambda^*(x)$:
\begin{align}
\label{eq:RL}
&\lim_{\epsilon \downarrow 0} 
	\liminf_{k \rightarrow \infty}\frac{1}{k}\log
	P\left(\frac 1k \log(G(W_k)) \in B_\epsilon(x)\right)\nonumber\\
&=\lim_{\epsilon \downarrow 0} 
	\limsup_{k \rightarrow \infty}\frac{1}{k}\log
	P\left(\frac 1k \log(G(W_k)) \in B_\epsilon(x)\right),
\end{align}
where $B_\epsilon(x)=(x-\epsilon,x+\epsilon)$. 

\begin{theorem}[The large deviations of guesswork] 
\label{thm:main}
Under assumptions \ref{ass:1} and \ref{ass:2}, the
sequence $\{k^{-1}\log G(W_k)\}$ satisfies a LDP with rate function
$\Lambda^*$.
\end{theorem}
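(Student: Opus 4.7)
The plan is to verify \eqref{eq:RL} for every $x \in [0, \log m]$, which, per the discussion preceding the theorem, suffices to establish the LDP with rate function $\Lambda^*$. Because $\Lambda^*$ has the distinctive two-piece shape---affine $-x - \p1$ on $[0, \TURN]$ and strictly convex on $(\TURN, \log m]$---I would split the argument into these regimes and apply different tools to each. Continuity of $\Lambda^*$ where finite glues the pieces at $x = \TURN$, and the boundedness $0 \leq k^{-1}\log G(W_k) \leq \log m$ together with monotonicity takes care of the endpoints.

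For $x \in (\TURN, \log m)$ a local G\"artner--Ellis / Baldi argument applies: under Assumption \ref{ass:1}, $\Lambda$ is convex and continuously differentiable on $(-1, \infty)$, so for every such $x$ there exists $\alpha_x > -1$ with $\Lambda'(\alpha_x) = x$, and $\alpha_x$ exposes $\Lambda^*$ at $x$. Baldi's theorem \cite{Baldi88}\cite[Theorem 4.5.20]{Dembo} then delivers both halves of \eqref{eq:RL}, equal to $-\Lambda^*(x)$.

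For $x \in [0, \TURN]$, where the Legendre transform is affine and exposing hyperplanes are absent, I would argue directly. For the upper bound, using that no word's probability exceeds $P(G(W_k) = 1)$,
\begin{align*}
P\!\left(\tfrac{1}{k}\log G(W_k) \in B_\epsilon(x)\right)
\leq \bigl|\{n \in \N : \tfrac{1}{k}\log n \in B_\epsilon(x)\}\bigr| \, P(G(W_k) = 1).
\end{align*}
The cardinality is at most $\exp(k(x+\epsilon))$, and by Assumption \ref{ass:2} the second factor is $\exp(k(\p1 + o(1)))$; taking $\limsup k^{-1}\log$ and then $\epsilon \downarrow 0$ yields $-(x + \p1) = -\Lambda^*(x)$. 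For the lower bound, I would appeal to the characterization $\p1 = \lim k^{-1}\log \inf_{\{w : G(w) < \exp(k\TURN)\}} P(W_k = w)$ displayed just before the definition of $\Lambda^*$: for any $\delta > 0$ and large $k$, every $n < \exp(k\TURN)$ satisfies $P(G(W_k) = n) \geq \exp(k(\p1 - \delta))$. Restricting the summation to those $n \in B_\epsilon(x)$ also lying below $\exp(k\TURN)$---nonempty and of cardinality at least $\exp(k(x - \epsilon))$ for $x < \TURN$ and $\epsilon$ small---gives a lower bound of $\exp(k(x - \epsilon + \p1 - \delta))$, which matches the upper bound as $\epsilon, \delta \downarrow 0$.

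The principal obstacle is justifying the characterization of $\p1$ invoked in the lower bound: that the first $\exp(k\TURN)$ most-likely words all carry probability asymptotically $\exp(k\p1)$. The crude bound \eqref{eq:neqTURN} is insufficient, as illustrated by the example that follows it, so the argument must combine convexity of $\Lambda$, the identity $\Lambda'((-1)^+) = \TURN$, and Lemma \ref{lem:flatsCGF} to pin the count from both sides. Once that is in hand, everything else reduces to routine bookkeeping on geometric sums and standard local LDP bounds.
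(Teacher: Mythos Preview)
Your overall architecture matches the paper's: split at $\TURN$, apply Baldi on $(\TURN,\log m]$, and argue directly on the affine piece. Your upper bound on $[0,\TURN]$ is fine and coincides with Baldi's upper bound there. The gap is exactly where you flag it: the lower bound on $[0,\TURN]$ hinges on the displayed identity $\p1=\lim k^{-1}\log\inf_{\{w:G(w)<\exp(k\TURN)\}}P(W_k=w)$, but that identity is an announced \emph{consequence} of the theorem, not an input to it. You acknowledge this, yet the resolution you sketch---``combine convexity of $\Lambda$, $\Lambda'((-1)^+)=\TURN$, and Lemma~\ref{lem:flatsCGF} to pin the count''---does not by itself extract pointwise information about $P(G(W_k)=n)$ from moment asymptotics; that is precisely what the local LDP is for, so the suggestion is circular as stated.

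The mechanism the paper uses, and which your proposal is missing, is a \emph{transfer from the Baldi region via monotonicity}. Pick $x^*\in(\TURN,\log m]$ with $\Lambda^*(x^*)<\infty$ and $x^*-\epsilon>x+\epsilon$. Baldi's lower bound at $x^*$ gives $P(k^{-1}\log G(W_k)\in B_\epsilon(x^*))\gtrsim e^{-k\Lambda^*(x^*)}$. Bounding this probability above by $|K_k(x^*,\epsilon)|\sup_{K_k(x^*,\epsilon)}P(W_k=w)$ and then using that guesswork is decreasing in probability, $\sup_{K_k(x^*,\epsilon)}P(W_k=w)\le\inf_{K_k(x,\epsilon)}P(W_k=w)$, yields $\inf_{K_k(x,\epsilon)}k^{-1}\log P(W_k=w)\gtrsim -x^*-\Lambda^*(x^*)$. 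Letting $x^*\downarrow\TURN$ and using $\Lambda^*(\TURN)=-\TURN-\p1$ (this is where your listed ingredients enter) gives $\p1$, as needed. You also omit the degenerate case in which $\Lambda^*(x)=\infty$ for all $x>\TURN$ (no Baldi point to transfer from); the paper handles this separately by showing $\TURN=\lm$ equals the log-growth rate of the support and $\p1=-\lm$, then closing with a mass-balance contradiction and the principle of the largest term.
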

\begin{IEEEproof}
To establish \eqref{eq:RL} we have separate arguments depending on
$x$. We divide $[0,\log(m)]$ into two parts: $[0,\TURN]$ and
$(\TURN,\log(m)]$. Baldi's upper bound holds for any $x\in[0,\log(m)]$.
Baldi's lower bound applies for any $x\in(\TURN,\log(m)]$ as 
$\Lambda^*$ is continuous and, as $\Lambda(\alpha)$  has a continuous
derivative for $\alpha>-1$, it only has a finite number of points
without exposing hyper-planes in that region.  For $x \in [0,\TURN]$,
however, we need an alternate lower bound.

Consider $x \in [0, \TURN]$ and define the sets
\begin{align*}
K_k(x,\epsilon) := 
	\left\{w\in\A^k:k^{-1}\log G(w) \in B_\epsilon(x)\right\},
\end{align*}
letting $|K_k(x,\epsilon)|$ denote the number of elements in each
set. We have the bound
\begin{align*}
&|K_k(x,\epsilon)|\inf_{w \in K_k(x,\epsilon)}\P(W_k=w)\\
&\le P\left(\frac 1k\log G(W_k) \in B_\epsilon(x)\right).
\end{align*}
As $\lfloor e^{k(x-\epsilon)} \rfloor \le
|K_k(x,\epsilon)| \le \lceil e^{k(x+\epsilon)} \rceil$,
we have that
\begin{align}
\label{eq:Kk}
x= \lim_{\epsilon\to0}\lim_{k\to\infty}\frac 1k \log |K_k(x,\epsilon)|.
\end{align}
By Baldi's upper bound, we have that
\begin{align*}
\lim_{\epsilon \downarrow 0} \limsup_{k \rightarrow \infty}
	\frac 1k \log P\left(\frac 1k\log G(W_k) \in B_\epsilon(x)\right)
&\leq x+\p1.
\end{align*}
Thus to complete the argument, for the complementary lower bound
we need to show that for any $x\in[0,\TURN]$
\begin{align*}
\lim_{\epsilon \downarrow 0} \liminf_{k \rightarrow \infty}
	\inf_{w \in K_k(x,\epsilon)}\frac{1}{k}\log\P(W_k=w) =\p1.
\end{align*}
If $\Lambda^*(x)<\infty$ for some $x>\TURN$, then for $\epsilon>0$
sufficiently small let $x^*$ be such that $\Lambda^*(x^*)<\infty$
and $x^*-\epsilon>\max(\TURN,x+\epsilon)$. Then by Baldi's lower bound,
which applies as $x^*\in(\TURN,\log(m)]$, we have
\begin{align*}
-\inf_{y\in B_\epsilon(x^*)}\Lambda^*(y)
\le 
	\liminf_{k\to\infty} \frac 1k \log 
	P\left(\frac 1k\log G(W_k) \in B_\epsilon(x^*)\right).
\end{align*}
Now
\begin{align*}
&P\left(\frac 1k\log G(W_k) \in B_\epsilon(x^*)\right)\\
&\le |K_k(x^*,\epsilon)|\sup_{w \in K_k(x^*,\epsilon)}\P(W_k=w)\\
&\le |K_k(x^*,\epsilon)|\inf_{w \in K_k(x,\epsilon)}\P(W_k=w),
\end{align*}
where in the last line we have used the monotonicity of guesswork
and the fact that $x^*-\epsilon>x+\epsilon$. Taking lower limits
and using equation \eqref{eq:Kk} with $|K_k(x^*,\epsilon)|$,
we have that
\begin{align*}
-\inf_{y\in B_\epsilon(x^*)}\Lambda^*(y)
&\le
	x^* + \liminf_{k \to\infty}
	\inf_{w \in K_k(x,\epsilon)}\frac{1}{k}\log\P(W_k=w)
\end{align*}
for all such $x^*,x$. Taking limits as $\epsilon\downarrow0$
and then limits as $x^*\downarrow\TURN$ we have
\begin{align*}
-\lim_{x^*\downarrow\TURN}\Lambda^*(x^*)\le 
	\TURN + 
\lim_{\epsilon \downarrow 0} \liminf_{k \rightarrow \infty}
	\inf_{w \in K_k(x,\epsilon)}\frac{1}{k}\log\P(W_k=w),
\end{align*}
but $\lim_{x^*\downarrow\TURN}\Lambda^*(x^*)=-\TURN-\p1$ so that
\begin{align*}
\lim_{\epsilon \downarrow 0} \liminf_{k \rightarrow \infty}
	\inf_{w \in K_k(x,\epsilon)}\frac{1}{k}\log\P(W_k=w)
	=\p1,
\end{align*}
as required.

Only one case remains: if $\Lambda^*(x)=\infty$ for all $x>\TURN$,
then we require an alternative argument to ensure that
\begin{align*}
\liminf_{k \to\infty}
	\inf_{w \in K_k(x,\epsilon)}\frac{1}{k}\log\P(W_k=w)
	= \p1.
\end{align*}
This situation happens if, in the limit, the distribution of words
is near uniform on the set of all words with positive probability.
Thus define
\begin{align*}
\lm :=\limsup_{k\to\infty}\frac 1k \log |\{w:P(W_k=w)>0\}|.
\end{align*}
As $\Lambda^*(x)=\infty$ for all $x>\TURN$, $\lm\leq \TURN$.  To
see $\TURN=\lm$, note that $\TURN= \lim_{\alpha \downarrow
-1}\Lambda'(\alpha)\leq \Lambda'(0)$. As
both $\Lambda(\alpha)$ and $\alpha R_k((1+\alpha)^{-1})$ are finite
and differentiable in a neighborhood of $0$, by \cite[Theorem
25.7]{Rockafellar70}
\begin{align*}
\Lambda'(0) 
= \lim_{k\to\infty} \frac 1k \frac{d}{d\alpha} 
	\alpha R_k((1+\alpha)^{-1})|_{\alpha=0} 
	= \lim_{k\to\infty} \frac 1k H(W_k).
\end{align*}
and $\lim_{k\to\infty} k^{-1} H(W_k) \leq \lm$. Thus $\TURN=\lm$ and,
due to convexity, $\Lambda$ is linear with slope $\lm$ on
$\alpha\in(-1,0]$. As $\Lambda(0)=0$, using Lemma \ref{lem:flatsCGF}
we have that $\p1=-\lm$. Let $x<\lm$ and consider
\begin{align*}
\l1 &=
\limsup_{k \to\infty}
        \sup_{w \in K_k(x+2\epsilon,\epsilon)}\frac{1}{k}\log\P(W_k=w)\\
        &\leq
\liminf_{k \to\infty}
        \inf_{w \in K_k(x,\epsilon)}\frac{1}{k}\log\P(W_k=w).
\end{align*}
We shall assume that $\l1<\p1$ and show this results in a contradiction.
Let $\epsilon<\min(\p1-\l1,\lm-x)/2$, then there
exists $N_\epsilon$ such that
\begin{align*}
\sum_{w\in\A^k} P(W_k=w) 
	&\leq e^{k(x+\epsilon)} e^{k(\p1+\epsilon)}
	+ e^{k(\lm+\epsilon)} e^{k(\l1+\epsilon)}\\
	&= e^{k(-\lm+x+2\epsilon)} + e^{k(-\p1+\l1+2\epsilon)},
\end{align*}
for all $k>N_\epsilon$, but this is strictly less than $1$ for $k$
sufficiently large and thus $\l1=\p1$. Finally, for $x=\lm$, 
and $\epsilon>0$,
note that we can decompose
$[0,\log(m)]$ into three parts, 
$[0,\lm-\epsilon]\cup(\lm-\epsilon,\lm+\epsilon)\cup[\lm+\epsilon,\log(m)]$,
where the scaled probability of the guesswork being in either the
first or last set is decaying, but
\begin{align*}
0 &= 
\lim_{k\to\infty} \frac 1k \log P\left(\frac 1k \log G(W_k)\in[0,\log(m)]\right)
\end{align*}
and so the result follows from an application of the principle of
the largest term.

Thus for any $x\in[0,\log(m)]$,
\begin{align*}
&\lim_{\epsilon \downarrow 0} 
	\liminf_{k \rightarrow \infty}\frac{1}{k}\log
	P\left(\frac 1k \log(G(W_k)) \in B_\epsilon(x)\right)\\
	&=
\lim_{\epsilon \downarrow 0} 
	\limsup_{k \rightarrow \infty}\frac{1}{k}\log
	P\left(\frac 1k \log(G(W_k)) \in B_\epsilon(x)\right)\\
	&=-\Lambda^*(x)
\end{align*}
and the LDP is proved.
\end{IEEEproof}

In establishing the LDP, we have shown that any rate function that
governs such an LDP must have the form of a straight line in $[0,
\TURN]$ followed by a strictly convex function. The initial straight
line comes from all words that are, in an asymptotic sense, of
greatest likelihood.

While the LDP is for the sequence $\{k^{-1}\log G(W_k)\}$, it can
be used to develop the more valuable direct estimate of the distribution
of each $G(W_k)$ found in equation \eqref{eq:wknapprox}. The next
corollary provides a rigorous statement, but an intuitive, non-rigorous
argument 
for understanding the result therein is that from the LDP we have
the approximation that for large $k$
\begin{align*}
dP\left(\frac 1k \log G(W_k) = x\right) \approx \exp(-k\Lambda^*(x)).
\end{align*}
As for large $k$ the distribution of $k^{-1} \log
G(W_k)$ and $G(W_k)/k$ are ever closer to having densities,
using the change of variables formula gives
\begin{align*}
dP\left(\frac 1k G(W_k) = x\right) 
&= \frac{1}{kx} dP\left(\frac 1k \log G(W_k) = x\right)\\
&\approx 
\frac{1}{kx}\exp\left(-k\Lambda^*\left(\frac 1k \log(kx)\right)\right).
\end{align*}
Finally, the substitution $kx=n$ gives the approximation in equation
\eqref{eq:wknapprox}. To make this heuristic precise requires
distinct means, explained in the following corollary.

\begin{corollary}[Direct estimates on guesswork]
\label{cor:curious}
Recall the definition
\begin{align*}
K_k(x,\epsilon) := 
	\left\{w\in\A^k:k^{-1}\log G(w) \in B_\epsilon(x)\right\}.
\end{align*}
For any $x\in[0,\log(m)]$ we have
\begin{align*}
&\lim_{\epsilon \downarrow 0}
\liminf_{k \rightarrow \infty}\frac 1k \log \inf_{w\in K_k(x,\epsilon)}
\P(W_k =w)\\
&=
\lim_{\epsilon \downarrow 0}
\limsup_{k \rightarrow \infty}\frac 1k \log \sup_{w\in K_k(x,\epsilon)}
\P(W_k =w)\\
&=-\left(x+\Lambda^*(x)\right).
\end{align*} 
\end{corollary}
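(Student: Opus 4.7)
The plan is to combine the LDP from Theorem \ref{thm:main}, the counting estimate \eqref{eq:Kk}, and the fact that $G$ orders words in decreasing order of probability. The starting point is the sandwich
\begin{align*}
|K_k(x,\epsilon)|\inf_{w\in K_k(x,\epsilon)} P(W_k=w)
\le P\!\left(\tfrac{1}{k}\log G(W_k)\in B_\epsilon(x)\right)
\le |K_k(x,\epsilon)|\sup_{w\in K_k(x,\epsilon)} P(W_k=w).
\end{align*}
Dividing through by $|K_k(x,\epsilon)|$, taking $k^{-1}\log$, and invoking the LDP for the middle term together with \eqref{eq:Kk} yields the ``easy'' halves of the claim: the $\limsup_k$ of $k^{-1}\log\inf$ is at most $-(x+\Lambda^*(x))$, and the $\liminf_k$ of $k^{-1}\log\sup$ is at least $-(x+\Lambda^*(x))$, after sending $\epsilon\downarrow 0$ and using continuity of $\Lambda^*$ on its effective domain.

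For the matching bounds I would exploit that $G$ orders words by decreasing probability. Fixing $\epsilon>0$ and $x_+>x+2\epsilon$, every $w'\in K_k(x_+,\epsilon)$ has guesswork rank strictly larger than every $w\in K_k(x,\epsilon)$, so
\begin{align*}
\inf_{w\in K_k(x,\epsilon)} P(W_k=w)
\ge \sup_{w'\in K_k(x_+,\epsilon)} P(W_k=w').
\end{align*}
Applying the easy lower bound for the sup at $x_+$, sending $\epsilon\downarrow 0$ with $x_+$ fixed, and then letting $x_+\downarrow x$ while invoking continuity of $\Lambda^*$, I would recover the $\liminf_k\ge$ direction for the inf. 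A symmetric comparison with $x_-<x-2\epsilon$, with the roles of inf and sup interchanged, closes the $\limsup_k\le$ direction for the sup. Since $\liminf\le\limsup$ and $\inf\le\sup$, the two resulting two-sided estimates pin both iterated limits to $-(x+\Lambda^*(x))$.

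Boundary and degenerate cases will require extra care. At $x=0$ no admissible $x_-$ exists, but the trivial bound $\sup_{w\in K_k(0,\epsilon)} P(W_k=w)\le P(G(W_k)=1)=e^{k\p1+o(k)}=e^{-k\Lambda^*(0)}$ replaces the monotonicity step, and a mirror argument handles $x=\log(m)$. The step I expect to be the main obstacle is the regime in which $\Lambda^*(x)=\infty$: the target value is $-\infty$, no $x_+>x$ with $\Lambda^*(x_+)<\infty$ is available, and the monotonicity comparison fails outright. I would instead adapt the $\lm=\TURN$ portion of the proof of Theorem \ref{thm:main}, using that probability mass outside the asymptotic support of $k^{-1}\log G(W_k)$ must decay faster than any exponential, and hence so must $\sup_{w\in K_k(x,\epsilon)}P(W_k=w)$.
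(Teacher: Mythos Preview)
Your proposal is correct and follows essentially the same strategy as the paper: sandwich $P(k^{-1}\log G(W_k)\in B_\epsilon(x))$ between $|K_k|\inf$ and $|K_k|\sup$, invoke the LDP and the counting estimate \eqref{eq:Kk} for the ``easy'' halves, then use the monotonicity of $G$ to compare with a shifted point for the remaining halves. The only cosmetic difference is that the paper shifts by exactly $2\epsilon$ (comparing $\sup$ over $K_k(x,\epsilon)$ with $\inf$ over $K_k(x-2\epsilon,\epsilon)$) and takes a single limit $\epsilon\downarrow 0$, whereas you introduce an auxiliary $x_\pm$ and take an additional limit $x_\pm\to x$; both routes rely on the same continuity of $\Lambda^*$ where finite. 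Your treatment of the degenerate cases ($x=0$, $x=\log m$, and $\Lambda^*(x)=\infty$) is in fact more explicit than the paper's, which dispatches them with ``as do the edge cases.''
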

\begin{proof}
We show how to prove the upper bound as the lower bound follows
using analogous arguments, as do the edge cases. Let $x\in(0,\log(m))$
and $\epsilon>0$ be given. Using the monotonicity of guesswork
\begin{align*}
&\limsup_{k \to \infty}\frac 1k \log \sup_{w\in K_k(x,\epsilon)}
\P(W_k =w)\\
& \leq 
\liminf_{k \to \infty}\frac 1k \log \inf_{w\in K_k(x-2\epsilon,\epsilon)}
\P(W_k =w).
\end{align*}
Using the estimate found in Theorem \ref{thm:main} and the LDP 
provides an upper bound on the latter: 
\begin{align*}
&(x-3\epsilon)+
\liminf_{k \to \infty}\frac 1k \log \inf_{w\in K_k(x-2\epsilon,\epsilon)}
\P(W_k =w)\\
&\leq 
\liminf_{k \to \infty}\frac{1}{k}\log
	P\left(\frac 1k \log(G(W_k)) \in B_\epsilon(x-2\epsilon)\right)\\
&\leq
\limsup_{k \to \infty}\frac{1}{k}\log
	P\left(\frac 1k \log(G(W_k)) \in [x-3\epsilon,x-\epsilon]\right)\\
&\leq
-\inf_{x\in [x-3\epsilon,x-\epsilon]} \Lambda^*(x).
\end{align*}
Thus
\begin{align*}
&\limsup_{k \to \infty}\frac 1k \log \sup_{w\in K_k(x,\epsilon)}
\P(W_k =w)\\
&\leq 
-x+3\epsilon -\inf_{x\in [x-3\epsilon,x-\epsilon]} \Lambda^*(x).
\end{align*}
Thus the upper-bound follows taking $\epsilon\downarrow0$ and
using the continuity when finite of $\Lambda^*$. 
\end{proof}

Unpeeling limits, this corollary shows that when $k$ is large the
probability of the $n^{\rm th}$ most likely word is approximately
$1/n \exp(-k \Lambda^*(k^{-1}\log n))$, without the need to identify
the word itself. This justifies the approximation in equation
\eqref{eq:wknapprox}, whose complexity of evaluation does not depend
on $k$. We demonstrate its merit by example in Section \ref{sec:examples}.

Before that, as a corollary to the LDP we find the
following r\^ole for the specific Shannon entropy. Thus, although
Massey established that for a given word length the Shannon entropy
is only a lower bound on the guesswork, for growing password length
the specific Shannon entropy determines the linear growth rate of
the expectation of the logarithm of guesswork.

\begin{corollary}[Shannon entropy and guesswork]
\label{cor:shannon}
Under assumptions \ref{ass:1} and \ref{ass:2},
\begin{align*}
\lim_{k \rightarrow \infty}\frac 1k \E(\log G(W_k))
	=\lim_{k\to\infty}\frac 1k H(W_k),
\end{align*}
the specific Shannon entropy.
\end{corollary}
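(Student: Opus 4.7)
The plan is to combine the LDP from Theorem~\ref{thm:main} with the boundedness of $k^{-1}\log G(W_k)\in[0,\log m]$ to deduce convergence of the expectation to the unique minimizer of $\Lambda^*$, and then identify that minimizer with the specific Shannon entropy by computing $\Lambda'(0)$.

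First I would verify that $\Lambda^*$ has a unique zero, located at $\Lambda'(0)$. Since $\Lambda(0)=0$ and $\Lambda$ is convex, $\Lambda^*(x)\ge x\cdot 0-\Lambda(0)=0$. Differentiability of $\Lambda$ at $0$ (which follows from Assumption~\ref{ass:1} since $0>-1$) implies that the subgradient of $\Lambda$ at $0$ is $\{\Lambda'(0)\}$, and hence $\Lambda^*(x)=0$ if and only if $x=\Lambda'(0)$. Lower semicontinuity of $\Lambda^*$ then ensures that $\inf_{F}\Lambda^*>0$ for every closed set $F$ not containing $\Lambda'(0)$.

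Next I would use the LDP upper bound applied to the closed set $F_\delta=[0,\log m]\setminus(\Lambda'(0)-\delta,\Lambda'(0)+\delta)$ to conclude that $P(|k^{-1}\log G(W_k)-\Lambda'(0)|\ge\delta)$ decays exponentially, so $k^{-1}\log G(W_k)\to \Lambda'(0)$ in probability. Because the sequence is uniformly bounded in $[0,\log m]$, the bounded convergence theorem gives
\begin{align*}
\lim_{k\to\infty}\frac{1}{k}\E(\log G(W_k))=\Lambda'(0).
\end{align*}

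Finally I would identify $\Lambda'(0)$ with the specific Shannon entropy. This identification already appears inside the proof of Theorem~\ref{thm:main}: since both $\Lambda(\alpha)$ and $\alpha R_k((1+\alpha)^{-1})$ are finite and differentiable in a neighborhood of $\alpha=0$, \cite[Theorem~25.7]{Rockafellar70} lets us interchange limit and derivative, yielding
\begin{align*}
\Lambda'(0)=\lim_{k\to\infty}\frac{1}{k}\left.\frac{d}{d\alpha}\alpha R_k((1+\alpha)^{-1})\right|_{\alpha=0}=\lim_{k\to\infty}\frac{1}{k}H(W_k),
\end{align*}
where the last equality follows because $R_k(\beta)\to H(W_k)$ as $\beta\to 1$ and a direct differentiation of $\alpha R_k((1+\alpha)^{-1})$ at $\alpha=0$ produces $R_k(1)=H(W_k)$. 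Combining the two displays yields the corollary. The only part requiring care is the uniqueness of the zero of $\Lambda^*$ and the accompanying tail estimate; everything else is a routine application of results already established in the paper.
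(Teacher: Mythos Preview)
Your proposal is correct and follows essentially the same route as the paper: identify the unique zero of $\Lambda^*$ at $\Lambda'(0)$, use the LDP to obtain concentration there (the paper simply cites \cite{Lewis95} for this), and then invoke the computation $\Lambda'(0)=\lim_{k}k^{-1}H(W_k)$ already established in the proof of Theorem~\ref{thm:main}. Your version is in fact more explicit than the paper's, since you spell out the bounded-convergence step that upgrades convergence in probability to convergence of expectations.
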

\begin{IEEEproof}
Note that $\Lambda^*(x)=0$ if and only if $x = \Lambda'(0) =\lim
k^{-1} H(W_k)$, by arguments found in the proof of Theorem \ref{thm:main}.
The weak law then follows by concentration of measure, e.g.
\cite{Lewis95}.
\end{IEEEproof}

\section{Examples}
\label{sec:examples}

\emph{I.i.d letters}.

\begin{figure}
\includegraphics[scale=0.8]{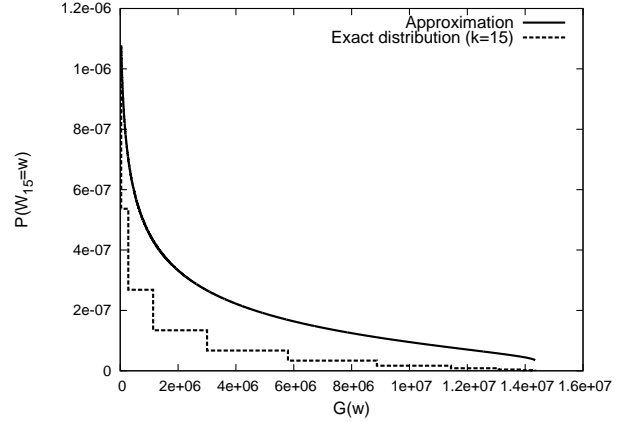}
\caption{Illustration of Corollary \ref{cor:curious}.
Words constructed from i.i.d letters with $P(W_1=1)=0.4, P(W_1=2)=0.4,
P(W_1=3)=0.2$. For $k=15$ comparison of the probability of $n^{\rm
th}$ most likely word and the approximation $1/n \exp(-k
\Lambda^*(k^{-1}\log n))$ versus $n\in\{1,\ldots,3^{15}\}$.}
\label{fig:fest}
\end{figure}

\begin{figure}
\includegraphics[scale=0.8]{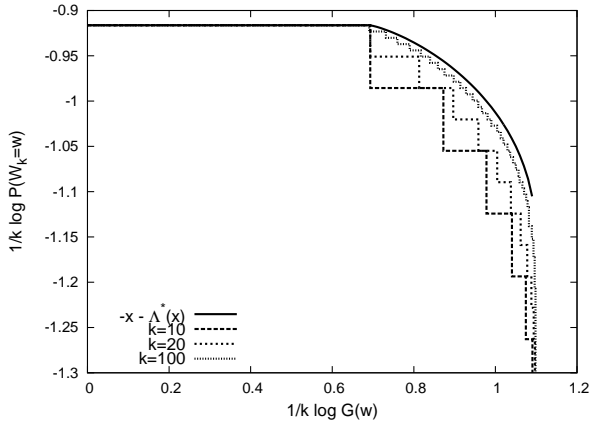}
\caption{Illustration of Corollary \ref{cor:curious}.
Words constructed from i.i.d letters with $P(W_1=1)=0.4, P(W_1=2)=0.4,
P(W_1=3)=0.2$. For $k=10,20$ and $100$, comparison of $k^{-1}$ times
the logarithm of the probability of $n^{\rm th}$ most likely word
versus $k^{-1}$ times the logarithm of $n$, as well as the approximation
$-x-\Lambda^*(x)$ versus $x$.}
\label{fig:fconv}
\end{figure}

Assume words are constructed of i.i.d. letters. Let $W_1$ take
values in $\A=\{1,\ldots,m\}$ and assume $\P(W_1=i)\ge\P(W_1=j)$
if $i\le j$. Then from \cite{Arikan96,Pfister04} and Lemma
\ref{lem:flatsCGF} we have that
\begin{align*}
\Lambda(\alpha) = 
	\begin{cases}
\displaystyle (1+\alpha) \log\sum_{w\in\A} P(W_1=w)^{1/(1+\alpha)}
	& \text{if } \alpha>-1 \\
	\log \P(W_1=1) & \text{if } \alpha\le-1.
	\end{cases}
\end{align*}
From Lemma \ref{lem:iid} we have that
\begin{align*}
\TURN=\lim_{\alpha\downarrow-1}\Lambda'(\alpha)
	\in\{0,\log(2),\ldots,\log(m)\}
\end{align*}
and no other values are possible. Unless the distribution of $W_1$
is uniform, $\Lambda^*(x)$ does not have a closed form for all $x$,
but is readily calculated numerically. With $|\A|=3$ and $k=15$,
Figure \ref{fig:fest} compares the exact distribution $P(W_k=w)$
versus $G(w)$ with the approximation found in equation \eqref{eq:wknapprox}.
As there are $3^{15}\approx 1.4$ million words, the likelihood of
any one word is tiny, but the quality of the approximation can
clearly be seen. Rescaling the guesswork and probabilities to make
them comparable for distinct $k$, Figure \ref{fig:fconv} illustrates
the quality of the approximation as $k$ grows. By $k=100$ there are
$3^{100}\approx 5.1$ times $10^{47}$ words and the underlying
combinatorial complexities of the explicit calculation become
immense, yet the complexity of calculating the approximation has
not increased.

\emph{Markovian letters}.

\begin{figure}
\includegraphics[scale=0.8]{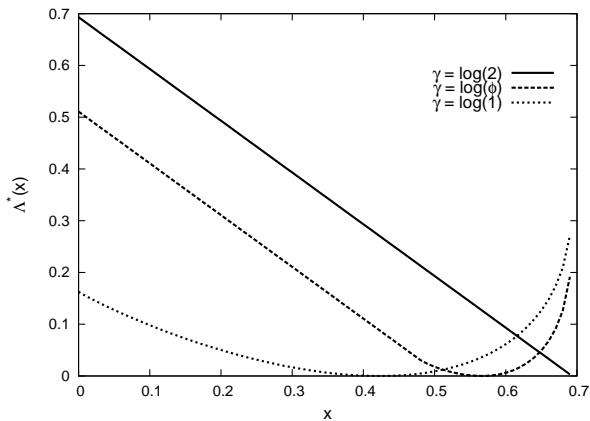}
\caption{Illustration of rate functions in Theorem \ref{thm:main}.
Words constructed from Markov letters on $|\A|=2$. Three rate
functions illustrating only values of $\TURN$ possible, $\log(1)$,
$\log(\phi)\approx 0.48$ and $\log(2)$, from Lemma \ref{lem:Markphi}.}
\label{fig:markrf}
\end{figure}

As an example of words constructed of correlated letters, consider
$\{W_k\}$ where the letters are chosen via a process a Markov chain
with transition matrix $P$ and some initial distribution on $|\A|=2$.
Define the matrix $P_\alpha$ by $(P_\alpha)_{i,j}=p_{i,j}^{1/(1+\alpha)}$,
then by \cite{Malone04,Pfister04} and Lemma \ref{lem:flatsCGF} we
have that
\begin{align*}
\Lambda(\alpha) = 
	\begin{cases}
	(1+\alpha)\log \rho(P_\alpha) & \text{if } \alpha>-1 \\
	\log\max(p_{1,1},p_{2,2},\sqrt{p_{1,2}\, p_{2,1}}) 
		& \text{if } \alpha\le-1,
	\end{cases}
\end{align*}
where $\rho$ is the spectral radius operator. In the two letter
alphabet case, with $\beta=1/(1+\alpha)$ we have that
$\rho(P_{(1-\beta)/\beta})$ equals
\begin{align*}
\frac{p_{1,1}^{\beta}+p_{2,2}^{\beta}}{2} 
+ \frac{\sqrt{(p_{1,1}^{\beta} - p_{2,2}^{\beta})^2
        + 4(1-p_{2,2})^{\beta}(1-p_{1,1})^{\beta}}}{2}.
\end{align*}
As with the i.i.d. letters example, apart from in special cases,
the rate function $\Lambda^*$ cannot be calculated in closed form,
but is readily evaluated numerically. Regardless, we have the following,
perhaps surprising, result on the exponential rate of growth of the
size of the set of almost most likely words.

\begin{lemma}[The Golden Ratio and Markovian letters]
\label{lem:Markphi}
For $\{W_k\}$ constructed of Markovian letters,
\begin{align*}
\TURN=\lim_{\alpha\downarrow-1}\Lambda'(\alpha)
	\in\{0,\log(\phi),\log(2)\},
\end{align*}
where $\phi= (1+\sqrt{5})/2$ is the Golden Ratio,
and no other values
are possible.
\end{lemma}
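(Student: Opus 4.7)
The plan is to compute $\TURN$ directly from the closed-form spectral radius. Parametrising $\beta = 1/(1+\alpha)$, so that $\alpha\downarrow -1$ corresponds to $\beta\to\infty$, one has $\Lambda(\alpha) = \beta^{-1}\log\rho(\beta)$ where $\rho(\beta) := \rho(P_{(1-\beta)/\beta})$ is the explicit expression displayed in the excerpt. A one-line chain-rule computation gives
\begin{align*}
\TURN = \lim_{\beta\to\infty}\left[\log\rho(\beta) - \beta\,\frac{d}{d\beta}\log\rho(\beta)\right],
\end{align*}
reducing the problem to the asymptotics of a single real-valued function.

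Set $q := \sqrt{(1-p_{1,1})(1-p_{2,2})} = \sqrt{p_{1,2}p_{2,1}}$ and $a := \max(p_{1,1}, p_{2,2}, q)$; by Lemma \ref{lem:flatsCGF} we have $\p1 = \log a$. The strategy is to factor $\rho(\beta) = a^\beta C(\beta)$ and to show that in every case $C(\beta) \to C_\infty$ with an error that decays exponentially in $\beta$. This exponential decay forces $\beta C'(\beta)/C(\beta)\to 0$, so $\TURN = \log C_\infty$, and the problem reduces to identifying $C_\infty$ in each possible configuration of ties among $\{p_{1,1}, p_{2,2}, q\}$. A short case split then yields exactly three values: (i) if the maximiser is unique, or if $p_{1,1} = p_{2,2} = a > q$, direct factorisation of the square root gives $C_\infty = 1$ and hence $\TURN = 0$; (ii) if $p_{1,1} = q = a > p_{2,2}$ (or its symmetric counterpart), then setting $r := p_{2,2}/a < 1$ yields $C(\beta) = (1+r^\beta)/2 + \tfrac{1}{2}\sqrt{(1-r^\beta)^2 + 4}$, which converges to $(1+\sqrt{5})/2 = \phi$, giving $\TURN = \log\phi$; (iii) if all three coincide, the equations $p_{1,1}^2 = p_{2,2}^2 = (1-p_{1,1})(1-p_{2,2})$ force $p_{1,1} = p_{2,2} = 1/2$, whence $\rho(\beta) = 2\cdot (1/2)^\beta$ exactly and $\TURN = \log 2$.

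The main obstacle, and the only substantive step, is case (ii): the golden ratio appears algebraically, as the limit is a root of $x = \tfrac{1}{2} + \tfrac{1}{2}\sqrt{1+4}$, where the characteristic ``$1$ versus $4$'' coefficients inside the square root are dictated by the two-by-two structure of the spectral radius formula. The other verifications are essentially combinatorial: one must check that the listed configurations are exhaustive, in particular that the configuration $p_{1,1} = p_{2,2} = q$ collapses to the all-$1/2$ case, so that no fourth value of $\TURN$ can arise. Once the case enumeration is complete and the exponential decay of the derivative correction term is noted in each case, the lemma follows with $\TURN \in \{0, \log\phi, \log 2\}$.
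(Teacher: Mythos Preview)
Your approach is correct and is precisely what the paper has in mind: the paper's own proof is the one-line remark ``This lemma can be proved by directly evaluating the derivative of $\Lambda(\alpha)$ with respect to $\alpha$,'' together with the subsequent identification of the cases $p_{1,1}=p_{2,2}=1/2$ and $p_{1,1}=\sqrt{p_{1,2}p_{2,1}}>p_{2,2}$. You have carried out exactly this computation, with the reparametrisation $\beta=1/(1+\alpha)$ and the factorisation $\rho(\beta)=a^\beta C(\beta)$ being natural devices to organise the limit; the case enumeration and the observation that the three-way tie forces $p_{1,1}=p_{2,2}=1/2$ match the paper's discussion.
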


This lemma can be proved by directly evaluating the derivative of
$\Lambda(\alpha)$ with respect to $\alpha$. Note that here $\exp(k
\TURN)$ definitely only describes the number of words of equal
highest likelihood when $k$ is large as the initial distribution
of the Markov chain plays no r\^ole in $\TURN$'s evaluation.

The case where $\TURN=\log(2)$ occurs when $p_{1,1}=p_{2,2}=1/2$.
The most interesting case is when there are approximately $\phi^k$
approximately equally most likely words. This occurs if
$p_{1,1}=\sqrt{p_{1,2}p_{2,1}}>p_{2,2}$. For large $k$, words of
near-maximal probability have the form of a sequence of 1s, where
a 2 can be inserted anywhere so long as there is a 1 between it and
any other 2s. A further sub-exponential number of aberrations are
allowed in any given sequence. For example, with an equiprobable
initial distribution and $k=4$ there are $8$ most likely words
(1111, 1112, 1121, 1211, 1212, 2111, 2121, 2112) and $\phi^4\approx
6.86$.

Figure \ref{fig:markrf} gives plots of $\Lambda^*(x)$ versus $x$
illustrating the full range of possible shapes that rate functions
can take: linear, linear then strictly convex, or strictly convex,
based on the transition matrices
\begin{align*}
\left(
	\begin{matrix}
	0.5 & 0.5 \\
	0.5 & 0.5 
	\end{matrix}
\right),
\left(
	\begin{matrix}
	0.6 & 0.4 \\
	0.9 & 0.1 
	\end{matrix}
\right)
\text{ and }
\left(
	\begin{matrix}
	0.85 & 0.15 \\
	0.15 & 0.85 
	\end{matrix}
\right)
\end{align*}
respectively.

\end{document}